\providecommand{\U}[1]{\protect\rule{.1in}{.1in}}
\newtheorem{theorem}{Theorem}
\newtheorem{proposition}[theorem]{Proposition}
\newenvironment{proof}[1][Proof]{\noindent\textbf{#1.} }{\ \rule{0.5em}{0.5em}}
\begin{document}
\title[Short title for running header]{Coherent states, vacuum structure and infinite component relativistic wave
equations }
\author{Diego Julio Cirilo-Lombardo}
\affiliation{National Institute of Plasma Physics (INFIP), Consejo Nacional de
Investigaciones Cientificas y Tecnicas (CONICET), Facultad de Ciencias Exactas
y Naturales, Universidad de Buenos Aires, Ciudad Universitaria, Buenos Aires
1428, Argentina}
\affiliation{Bogoliubov Laboratory of Theoretical Physics, 141980, Joint Institute for
Nuclear Research, Dubna (Moscow Region), Russian Federation}
\keywords{one two three}
\pacs{PACS number}

\begin{abstract}
It is commonly claimed in the recent literature that certain solutions to wave
equations of positive energy of Dirac-type with internal variables are
characterized by a non-thermal spectrum. As part of that statement, it was
said that the transformations and symmetries involved in equations of such
type correspond to a particular representation of the Lorentz group. In this
paper we give the general solution to this problem emphasizing the interplay
between the group structure, the corresponding algebra and the physical
spectrum. This analysis is completed with a strong discussion and proving
that: i) the physical states are represented by coherent states; ii) the
solutions in previous references [1] are not general, ii) the symmetries of
the considered physical system in [1] (equations and geometry) do not
correspond to the Lorentz group but to the fourth covering: the Metaplectic
group $Mp(n)$.

\end{abstract}
\volumeyear{year}
\volumenumber{number}
\issuenumber{number}
\eid{identifier}
\date[Date text]{date}
\received[Received text]{date}

\revised[Revised text]{date}

\accepted[Accepted text]{date}

\published[Published text]{date}

\startpage{101}
\endpage{102}
\maketitle
\tableofcontents

\section{Introduction and results}

For the last 50 years, there has been an increase in the interest with respect
to two fundamental points of theoretical physics: the new representations of
algebras with variables of the harmonic oscillator and the study of
relativistic wave equations. These two points were developed placing great
attention on the condition of positive energy and the role of the spin in
these representations. The main motives were the theoretical problems of
optics, the positive energy spectrum of physical states and the close relation
between the spin and the generalized statistics. Despite the recent interest
and the continuous efforts into the study on compact groups and their
relationship to physics, there was no major progress on the issue.

For example, in a recent reference it was claimed that certain solutions to
wave equations of positive energy of the Dirac type with internal variables
have as the main characteristic a non-thermal spectrum. As part of that
statement, it was said that the transformations and symmetries involved in a
such an equation with correspond to a particular representation of the Lorentz group.

In this work we will demonstrate that both claims and the same state solutions
in [1] are unfortunately not fully correct. Calculating the solutions of the
physical system of [1] we show that:

i) the solutions are coherent states, as described before (e.g.[2,5,6,7,8,9]);

ii) we show that the transformations and symmetries involved into the
relativistic wave equation of [1] do not belong to the group of Lorentz but to
the double cover of the groups$Sp(2)$ and $SU(1,1$): the Metaplectic group
$Mp(2)$ [2,6,7];

iii) and that these solutions have, in general, a thermal spectrum going under
certain conditions, to the non-classical behaviour (squeezed)[1,5].

Regarding the theoretical basis of the problem we start as follows:

Let such a spinor such that can be described schematically by the chain:
\begin{equation}
A_{\alpha}:\in Mp\left(  2\right)  \supset Sp\left(  2\mathbb{R}\right)  \sim
SU(1,1)\supset SO(1,2)\approx L\left(  3\right)  \tag{1}%
\end{equation}
(take not of the above structure that will be important into the analysis that
follows) that is defined as:%

\begin{equation}
A_{\alpha}=\left(
\begin{array}
[c]{c}%
a\\
a^{+}%
\end{array}
\right)  _{\beta}\Rightarrow\left[  A_{\alpha},A_{\beta}\right]
=\epsilon_{\alpha\beta} \tag{2}%
\end{equation}
where $a$ and $a^{+}$ are standard annihilation and creation operators,
respectively. As we will see soon, there exists a close relation with the
squeezed vacuum structure. The equation to solve has the typical structure of
the positive energy equation with internal variables, as proposed by
Majorana[4] and Dirac[5], and is explicitly written as
\begin{equation}
\left(  \sigma^{i}\partial_{i}-m\right)  _{\alpha}^{\beta}A_{\beta}\left\vert
\psi\right\rangle =0 \tag{3}%
\end{equation}
In reference [1], similarly to the case of the Dirac positive energy equation,
a wave solution was proposed as:%
\begin{equation}
\left\vert \psi\right\rangle =e^{ip\cdot x}\left\vert u\right\rangle \tag{4}%
\end{equation}
The first wrong fact in ref.[1] is to assume a priori that the momentum $p$
and $x$ in the exponent of the proposed wave equation (4) commute with the
annihilation and creation operators $a$ and $a^{+}$. Consequently, in our
analysis we will to consider the phase space coordinates $p$ and $x$ in the
exponent of the proposed wave equations as constants or as if the annihilation
and creation operators $a$ and $a^{+}$ act in an internal or auxiliary space.

Only under these conditions we can insert (4) in (3) obtaining:
\begin{align}
\left(  ip_{i}\sigma_{i}-m\right)  _{\alpha}^{\beta}\left(
\begin{array}
[c]{c}%
a\\
a^{+}%
\end{array}
\right)  _{\beta}\left\vert u\right\rangle  &  =0\tag{5}\\
\left(
\begin{array}
[c]{cc}%
ip_{3}-m & ip_{1}-p_{2}\\
ip_{1}+p_{2} & -ip_{3}-m
\end{array}
\right)  \left(
\begin{array}
[c]{c}%
a\\
a^{+}%
\end{array}
\right)  _{\beta}\left\vert u\right\rangle  &  =0\tag{6}%
\end{align}
At this point the second wrong fact in [1] is evident: the author remains with
only one component of the spinor solution. In fact if we impose the same
conditions as in [1] namely $p_{i}=\left(  0,p,i\varepsilon\right)  $ we have%
\begin{align}
\left(
\begin{array}
[c]{cc}%
\varepsilon+m & p\\
-p & m-\varepsilon
\end{array}
\right)  \left(
\begin{array}
[c]{c}%
a\\
a^{+}%
\end{array}
\right)  _{\beta}\left\vert u\right\rangle  &  =0\tag{7}\\
\left(
\begin{array}
[c]{c}%
\left(  \varepsilon+m\right)  a+pa^{+}\\
-pa+\left(  m-\varepsilon\right)  a^{+}%
\end{array}
\right)  _{\beta}\left\vert u\right\rangle  &  =0\tag{8}%
\end{align}
Notice that there are two different and simultaneous conditions that
$\left\vert u\right\rangle $ must satisfy. If we put now $p=0$ \ as in the
ref.[1] then%
\begin{equation}
\left(
\begin{array}
[c]{c}%
\left(  \varepsilon+m\right)  a\\
\left(  m-\varepsilon\right)  a^{+}%
\end{array}
\right)  _{\beta}\left\vert u_{0}\right\rangle =0\tag{9}%
\end{equation}
Here we clearly see that $\left\vert u_{0}\right\rangle $ cannot be the Fock
vacuum $\left\vert 0\right\rangle $ as stated ref [1] (it can only be if
$m=\varepsilon).$Through the next sections we will find the true vacuum, the
spectrum and the solution of the problem.

\section{Relation with the squeezed vacuum}

Looking at expressions (7,8)it is not difficult to see that these can be
obtained from similar form as the squeezed vacuum. The squeezed vacuum is
generated by the $Mp(2)$ transformation $U=S\left(  \xi\right)  $
\begin{equation}
A_{\alpha}\rightarrow S\left(  \xi\right)  \left(
\begin{array}
[c]{c}%
a\\
a^{+}%
\end{array}
\right)  _{\alpha}S^{\dagger}\left(  \xi\right)  =\left(
\begin{array}
[c]{c}%
\lambda a+\mu a^{+}\\
\lambda^{\ast}a^{+}+\mu^{\ast}a
\end{array}
\right)  _{\alpha}\tag{10}%
\end{equation}
where $\lambda\left(  \xi\right)  $ and $\mu\left(  \xi\right)  $ satisfy
$\left\vert \lambda\right\vert ^{2}+\left\vert \mu\right\vert ^{2}=1,$ e.g.
SU(1,1) elements.

\bigskip We must note that the RHS\ of eq. (10) is governed by the operators
$S\left(  \xi\right)  \in Mp\left(  2\right)  $ being the right side affected
by a matrix representation of $SU\left(  1,1\right)  $ as follows%
\begin{equation}
S\left(  \xi\right)  \left(
\begin{array}
[c]{c}%
a\\
a^{+}%
\end{array}
\right)  _{\alpha}S^{\dagger}\left(  \xi\right)  =\left(
\begin{array}
[c]{cc}%
\lambda & \mu\\
\mu^{\ast} & \lambda^{\ast}%
\end{array}
\right)  \left(
\begin{array}
[c]{c}%
a\\
a^{+}%
\end{array}
\right)  _{\alpha} \tag{11}%
\end{equation}

\bigskip

\bigskip

Clearly the above equivalence is only local (infinitesimal) since at the level
of the group structure (see the chain (1)) there is an homomorphism
relationship. The homomorphisms between $Mp\left(  2\right)  $ and $SU\left(
1,1\right)  $ (or $Sp\left(  2\mathbb{R}\right)  ),$ which are two to one and
four to one in the case of $SO\left(  1,2\right)  $, can be expressed in
$\alpha$ (polar)-parameterization [6,7] in the usual way:%

\begin{align}
S\left(  \alpha_{\perp},\alpha_{3}\right)   &  \in Mp\left(  2\right)
\rightarrow s\left(  \alpha_{\perp},\left[  \alpha_{3}\right]  _{4\pi}\right)
\in Sp\left(  2\mathbb{R}\right) \tag{12}\\
\text{ \ \ \ \ \ \ \ \ \ }  &  \rightarrow s\left(  \alpha_{\perp},\left[
\alpha_{3}\right]  _{4\pi}\right)  \in SU\left(  1,1\right)  \tag{13}%
\end{align}%
\begin{gather}
\left[  \alpha_{3}\right]  _{4\pi}\in(-2\pi,2\pi]\rightarrow\left[  \alpha
_{3}\right]  _{4\pi}\operatorname{mod}4\pi,\alpha_{\perp}\in\mathbb{R}%
^{2},\alpha_{3}\in(-4\pi,4\pi]\text{ for the }SU\left(  1,1\right)  (\text{or
}Sp\left(  2\mathbb{R}\right)  )\text{ case}\tag{14}\\
\rightarrow\left[  \alpha_{3}\right]  _{8\pi}\operatorname{mod}8\pi
,\alpha_{\perp}\in\mathbb{R}^{2},\alpha_{3}\in(-8\pi,8\pi]\text{ }\tag{15}\\
\text{for the }SO\left(  1,2\right)  \text{ case}\nonumber
\end{gather}

Consequently, it is clear that the "two to one" and the "two to four" nature
are involved in the reduction of the range of the parameter $\alpha_{3}.$ This
is the main reason why, for the physical scenarios of current interest, the
above parameterization is better than the Iwasawa (KAN)\ one.

The most general expression for an element of the Metaplectic group can be
computed, with the following result:
\[
e^{A\left(  aa^{+}+a^{+}a\right)  +Ba^{+2}+Ca^{2}}=e^{-A/2}\exp\left(
\frac{Ba^{+2}}{\Delta\coth\Delta-A}\right)  \exp\left[  H\ln\left(
\frac{\Delta\sec h\Delta}{\Delta-A\tanh\Delta}\right)  \right]  \exp\left(
\frac{Ca^{2}}{\Delta\coth\Delta-A}\right)  ,
\]%
\begin{equation}
\Delta\equiv\sqrt{A^{2}-4BC},H\equiv\left(  \frac{aa^{+}+a^{+}a}{2}\right)
,\widehat{N}\equiv a^{+}a \tag{16}%
\end{equation}
where the Baker-Haussdorf-Campbell formula was used. $A,B,C$ are arbitrary in
principle only linked by expression (16) (all group theoretical propierties of
\ the noncompact groups involved, were assumed there). Therefore, with the
parameters as given by expressions (7) to (9) $S\left(  \xi\right)  $ takes a
concrete form as follows
\[
S\left(  \xi\right)  =\exp\left(  \frac{p}{m+\epsilon}a^{+2}\right)  \left(
\frac{1}{\sqrt{m^{2}-\epsilon^{2}}}\right)  ^{1/2}\left\{  \underset
{n=0}{\overset{\infty}{\sum}}\frac{1}{n!}\left[  \ln\left(  \frac{1}%
{\sqrt{m^{2}-\epsilon^{2}}}\right)  \widehat{N}\right]  ^{n}\right\}
\exp\left(  -\frac{p}{m-\epsilon}a^{2}\right)  ,
\]
thus, the unitary (squeezed) operator acting on the true vacuum (fiducial
vector) defines the following general state%
\begin{equation}
\left\vert \xi\right\rangle \equiv S\left(  \xi\right)  \left\vert
z_{0}\right\rangle \tag{17}%
\end{equation}

\section{The solution}

We arrive to the construction of coherent states on a general vacuum:
$A\left\vert 0\right\rangle +B\left\vert 1\right\rangle $ with $A$\ and $B$
depending on initial and boundary conditions. If $\left\vert z_{0}%
\right\rangle \equiv A\left\vert 0\right\rangle +B\left\vert 1\right\rangle $
then
\begin{gather}
\left\vert \xi\right\rangle \equiv S\left(  \xi\right)  \left\vert
z_{0}\right\rangle =\frac{\exp\left(  \alpha a^{+2}\right)  }{(m^{2}%
-\epsilon^{2})^{1/4}}\left[  A\left\vert 0\right\rangle +\frac{B}%
{(m^{2}-\epsilon^{2})^{1/2}}\left\vert 1\right\rangle \right]  \tag{18}\\
=(m^{2}-\epsilon^{2})^{-1/4}\overset{\infty}{\underset{n=0}{\sum}}\frac
{\alpha^{n}}{n!}\left(  a^{+2}\right)  ^{n}\left[  A+\frac{B}{(m^{2}%
-\epsilon^{2})^{1/2}}a\right]  \left\vert 0\right\rangle ,\tag{19}\\
with\text{ \ \ \ \ }\alpha\equiv\frac{p/2}{m+\epsilon}\tag{20}%
\end{gather}
Notice that $a^{2}$ annihilates $\left\vert z_{0}\right\rangle $ but $a$ does
not. After standard normalization, the constants in the "thermal" (photon)
case reach the critical point that is when the quantum state solution is
simultaneously eigenstate of $a$ and $a^{2}$, they take the particular
fashion
\begin{align}
A &  =\left(  \left\vert m^{2}-\epsilon^{2}\right\vert +p^{2}sign\left(
\epsilon^{2}-m^{2}\right)  \right)  ^{1/4},\tag{21}\\
B &  =\left(  \left\vert m^{2}-\epsilon^{2}\right\vert +p^{2}sign\left(
\epsilon^{2}-m^{2}\right)  \right)  ^{3/4}=A^{3}\tag{22}%
\end{align}
We have a standard coherent state (eigenstate of the operator $a$ ) as a
linear combination of two states belonging to $\mathcal{H}_{1/4}$ and
$\mathcal{H}_{3/4}$ respectively (that are independent coherent states as
eigenvalues of $a^{2}$). In this particular case we have%
\begin{equation}
\left\vert z_{0}\right\rangle _{th}=A(1+A^{2}a^{+})\left\vert 0\right\rangle
\tag{23}%
\end{equation}
notice that this vacuum is not singular at $m\rightarrow\epsilon$ but is
analytically continued into the complex plane where it is defined:%
\begin{equation}
\left\vert \xi\right\rangle _{th}\equiv S\left(  \xi\right)  \left\vert
z_{0}\right\rangle _{th}=\left(  1+\frac{p^{2}sign\left(  \epsilon^{2}%
-m^{2}\right)  }{\left\vert m^{2}-\epsilon^{2}\right\vert }\right)
^{1/4}e^{\frac{p/2}{m+\epsilon}a^{+2}}\left[  1+\left(  1+\frac{p^{2}%
sign\left(  \epsilon^{2}-m^{2}\right)  }{\left\vert m^{2}-\epsilon
^{2}\right\vert }\right)  ^{1/2}a^{+}\right]  \left\vert 0\right\rangle
\tag{24}%
\end{equation}

\section{Bargmann representation: analytical vs. geometrical viewpoint}

\subsection{The Bargmann representation}

We have so far worked mainly with the photon-number description of the Hilbert
space $\mathcal{H}$ and the operators $a,$ $a^{+}.$ In this section we analyze
the misunderstanding pointed out previously introducing the Bargmann representation.

The Bargmann representation of $\mathcal{H}$ associates an entire analytic
function$f\left(  z\right)  $ of a complex variable $z$, with each
vector$\left\vert \varphi\right\rangle \in\mathcal{H}$ in the following
manner:%
\begin{align}
\left\vert \varphi\right\rangle  &  \in\mathcal{H\rightarrow}f\left(
z\right)  =\underset{n=0}{\overset{\infty}{\sum}}\left\langle n\right.
\left\vert \varphi\right\rangle \frac{z^{n}}{\sqrt{n!}},\tag{25}\\
\left\langle \varphi\right.  \left\vert \varphi\right\rangle  &
\equiv\left\vert \left\vert \varphi\right\vert \right\vert ^{2}=\underset
{n=0}{\overset{\infty}{\sum}}\left\vert \left\langle n\right.  \left\vert
\varphi\right\rangle \right\vert ^{2}\tag{26}\\
&  =\int\frac{d^{2}z}{\pi}e^{-\left\vert z\right\vert ^{2}}\left\vert f\left(
z\right)  \right\vert ^{2},\tag{27}%
\end{align}
where the integration is over the entire complex plane. The above association
can be compactly written in terms of the normalized coherent states of the
Barut-Girardello type, namely, (right) eigenstates of the annihilation
operator $a:$%
\begin{align}
a\left\vert z\right\rangle  &  =z\left\vert z\right\rangle ,\tag{28}\\
\left\vert z\right\rangle  &  =e^{-\left\vert z\right\vert ^{2}/2}%
\underset{n=0}{\overset{\infty}{\sum}}\frac{z^{n}}{\sqrt{n!}}\left\vert
n\right\rangle ,\tag{29}\\
\left\langle z^{\prime}\right.  \left\vert z\right\rangle  &  =e^{\left(
-\left\vert z\right\vert ^{2}/2-\left\vert z^{\prime}\right\vert
^{2}/2+z^{\prime\ast}z\right)  }\tag{30}%
\end{align}
then, we have%
\begin{equation}
f\left(  z\right)  =e^{-\left\vert z\right\vert ^{2}/2}\left\langle z^{\ast
}\right.  \left\vert \varphi\right\rangle \tag{31}%
\end{equation}
However, $f\left(  z\right)  $ must be as $\left\vert z\right\vert
\rightarrow\infty$ so that $\left\vert \left\vert \varphi\right\vert
\right\vert $ is finite. In this particular representation, the actions of $a$
and $a$ $^{+},$ and the functions representing $\left\vert n\right\rangle $
are as follows:%
\begin{align}
\left(  a^{+}f\right)  \left(  z\right)   &  =zf\left(  z\right)  ,\tag{32}\\
\left(  af\right)  \left(  z\right)   &  =\frac{df\left(  z\right)  }%
{dz},\tag{33}\\
\left\vert n\right\rangle  &  \rightarrow\frac{z^{n}}{\sqrt{n!}}\tag{34}%
\end{align}

\subsection{$Mp\left(  2\right)  $ generalized coherent states in the Bargmann
representation}

Having introduced the necessary ingredients, we can now describe the physical
states of the system under consideration.

i) The $\mathcal{H}_{1/4}$ states occupy the sector even of the full Hilbert
space $\mathcal{H}$ and we may describe them as follows
\begin{align}
f^{\left(  +\right)  }\left(  z,\omega\right)   &  =\left(  1-\left\vert
\omega\right\vert ^{2}\right)  ^{1/4}e^{\omega z^{2}/2}\tag{35}\\
&  =\left(  1-\left\vert \omega\right\vert ^{2}\right)  ^{1/4}\underset
{m=0,1,2,..}{\sum}\frac{\left(  \omega/2\right)  ^{m}}{m!}z^{2m} \tag{36}%
\end{align}
then, in the vector representation we have:%
\begin{equation}
\left\vert \Psi^{\left(  +\right)  }\left(  \omega\right)  \right\rangle
=\left(  1-\left\vert \omega\right\vert ^{2}\right)  ^{1/4}\underset
{m=0,1,2,..}{\sum}\frac{\left(  \omega/2\right)  ^{m}}{m!}\sqrt{2m!}\left\vert
2m\right\rangle \tag{37}%
\end{equation}
consequently, the number representation is obtained as:%
\begin{align}
\left\langle 2m\right.  \left\vert \Psi^{\left(  +\right)  }\left(
\omega\right)  \right\rangle  &  =\left(  1-\left\vert \omega\right\vert
^{2}\right)  ^{1/4}\frac{\left(  \omega/2\right)  ^{m}}{m!}\sqrt{2m!}%
\tag{38}\\
\left\langle 2m+1\right.  \left\vert \Psi^{\left(  +\right)  }\left(
\omega\right)  \right\rangle  &  \equiv0 \tag{39}%
\end{align}

\bigskip

\bigskip

\bigskip

\bigskip ii) The $\mathcal{H}_{3/4}$ states occupy the odd sector of the full
Hilbert space $\mathcal{H}$ and we may describe them as before:%
\begin{align}
f^{\left(  -\right)  }\left(  z,\omega\right)   &  =\left(  1-\left\vert
\omega\right\vert ^{2}\right)  ^{3/4}ze^{\omega z^{2}/2}\tag{39}\\
&  =\left(  1-\left\vert \omega\right\vert ^{2}\right)  ^{3/4}\underset
{m=0,1,2,..}{\sum}\frac{\left(  \omega/2\right)  ^{m}}{m!}z^{2m+1} \tag{40}%
\end{align}
then, in vector representation we have:%
\begin{equation}
\left\vert \Psi^{\left(  -\right)  }\left(  \omega\right)  \right\rangle
=\left(  1-\left\vert \omega\right\vert ^{2}\right)  ^{3/4}\underset
{m=0,1,2,..}{\sum}\frac{\left(  \omega/2\right)  ^{m}}{m!}\sqrt{\left(
2m+1\right)  !}\left\vert 2m+1\right\rangle \tag{41}%
\end{equation}
The number representation is consequently:%
\begin{align}
\left\langle 2m+1\right.  \left\vert \Psi^{\left(  -\right)  }\left(
\omega\right)  \right\rangle  &  =\left(  1-\left\vert \omega\right\vert
^{2}\right)  ^{3/4}\frac{\left(  \omega/2\right)  ^{m}}{m!}\sqrt{\left(
2m+1\right)  !}\tag{42}\\
\left\langle 2m\right.  \left\vert \Psi^{\left(  -\right)  }\left(
\omega\right)  \right\rangle  &  \equiv0 \tag{43}%
\end{align}

iii) The full Hilbert space, defined by the direct sum $\mathcal{H}%
=\mathcal{H}_{1/4}\oplus\mathcal{H}_{3/4},$ is trivially described as follows%
\begin{align}
f\left(  z,\omega\right)   &  =f^{\left(  +\right)  }\left(  z,\omega\right)
+f^{\left(  -\right)  }\left(  z,\omega\right) \tag{44}\\
&  =\left(  1-\left\vert \omega\right\vert ^{2}\right)  ^{1/4}\underset
{m=0,1,2,..}{\sum}\frac{\left(  \omega/2\right)  ^{m}}{m!}z^{2m}\left[
1+\left(  1-\left\vert \omega\right\vert ^{2}\right)  ^{1/2}z\right]  \tag{45}%
\end{align}
Then, in complete analogy as their even and odd subspaces, the corresponding
states are described by:%
\begin{align}
\Psi\left(  \omega\right)   &  =\Psi^{\left(  +\right)  }\left(
\omega\right)  +\Psi^{\left(  -\right)  }\left(  \omega\right) \tag{46}\\
&  =\left(  1-\left\vert \omega\right\vert ^{2}\right)  ^{1/4}\underset
{m=0,1,2,..}{\sum}\frac{\left(  \omega/2\right)  ^{m}}{m!}\sqrt{2m!}\left[
1+\left(  1-\left\vert \omega\right\vert ^{2}\right)  ^{1/2}a^{+}\right]
\left\vert 2m\right\rangle \tag{47}%
\end{align}%
\begin{equation}
\left\langle m\right.  \left\vert \Psi\left(  \omega\right)  \right\rangle
\left\{
\begin{array}
[c]{c}%
\left(  1-\left\vert \omega\right\vert ^{2}\right)  ^{1/4}\frac{\left(
\omega/2\right)  ^{m}}{m!}\sqrt{2m!},\left(  m\text{ even}\right) \\
\\
\left(  1-\left\vert \omega\right\vert ^{2}\right)  ^{3/4}\frac{\left(
\omega/2\right)  ^{m}}{m!}\sqrt{\left(  2m+1\right)  !},\left(  m\text{
odd}\right)
\end{array}
\right.  \tag{48}%
\end{equation}
where the link between the physical observables and the group parameters is
given by the following expression (measure):
\begin{equation}
\left(  1+\frac{p^{2}sign\left(  \epsilon^{2}-m^{2}\right)  }{\left\vert
m^{2}-\epsilon^{2}\right\vert }\right)  ^{1/4}\rightarrow\left(  1-\left\vert
\omega\right\vert ^{2}\right)  ^{1/4} \tag{49}%
\end{equation}

\section{The Limit $\epsilon\rightarrow m$}

This is precisely the limit $\left\vert \omega\right\vert ^{2}\rightarrow1$
from the point of view of the Metaplectic analysis that corresponds to the
edge of the complex disc. As we could see easily, the states solutions are
generally thermalized (full spectrum corresponding to $\mathcal{H}$). What
happens is that in the limit $\epsilon\rightarrow m$ the density of states
corresponding to $\mathcal{H}_{1/4}$ is greater than that of the odd states
belonging to $\mathcal{H}_{3/4}.$ It is for this reason that states belonging
to $\mathcal{H}_{1/4}$, will survive in this limit. As we will see in a
separate publication, that there is a particular case of the two-dimensional
electron transport with a magnetic field in the plane whose states belong to
metaplectic group.

\section{Complete equivalence between Sannikov's representation and
Metaplectic one}

The main characteristics of the particular representation introduced in [2] is
the following commutation relation that defines the generators $L_{i}:$%

\begin{equation}
\left[  L_{i},a^{\alpha}\right]  =\frac{1}{2}a^{\beta}\left(  \sigma
_{i}\right)  _{\beta}^{\text{ \ }\alpha} \tag{50}%
\end{equation}
The above representation which corresponds to a non-compact Lie algebra with
the following matrix form [5,7]is:%

\begin{equation}
\sigma_{i}=i\left(
\begin{array}
[c]{cc}%
0 & 1\\
1 & 0
\end{array}
\right)  ,\quad\sigma_{j}=\left(
\begin{array}
[c]{cc}%
0 & 1\\
-1 & 0
\end{array}
\right)  ,\quad\sigma_{k}=\left(
\begin{array}
[c]{cc}%
1 & 0\\
0 & -1
\end{array}
\right)  , \tag{51,52,53}%
\end{equation}
that fulfils evidently:
\begin{align}
\sigma_{i}\wedge\sigma_{j}  &  =-i\sigma_{k}\tag{54}\\
\sigma_{k}\wedge\sigma_{i}  &  =i\sigma_{j}\tag{55}\\
\sigma_{j}\wedge\sigma_{k}  &  =i\sigma_{i} \tag{56}%
\end{align}
The equivalence that we want to remark is manifested by the following:

\begin{proposition}
the generators in the representation of [2] fulfil:%
\begin{equation}
L_{i}=\frac{1}{2}a^{\beta}\left(  \sigma_{i}\right)  _{\beta}^{\text{
\ }\alpha}a_{\alpha}=T_{i} \tag{57}%
\end{equation}
where T$_{i}$ are the Metaplectic generators namely[6,7]:%
\begin{align}
T_{1}  &  =\frac{i}{4}\left(  a^{+2}-a^{2}\right) \tag{58}\\
T_{2}  &  =\frac{-1}{4}\left(  a^{+2}+a^{2}\right) \tag{59}\\
T_{3}  &  =\frac{-1}{4}\left(  aa^{+}+a^{+}a\right)  \tag{60}%
\end{align}

\end{proposition}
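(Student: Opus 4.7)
The plan is to prove the identity by direct computation, treating each of the three generators $L_i$ separately and then cross‑checking through the algebraic structure. First I would expand the bilinear $\frac{1}{2}a^{\beta}(\sigma_i)_{\beta}{}^{\alpha}a_{\alpha}$ componentwise, using the identification $A_1=a$, $A_2=a^{+}$ together with the symplectic raising $a^{\beta}=\epsilon^{\beta\gamma}a_{\gamma}$ (so that $a^{1}=a^{+}$, $a^{2}=-a$, which is forced by $[A_\alpha,A_\beta]=\epsilon_{\alpha\beta}$). Substituting the explicit matrices (51)--(53) then turns the summation into a quadratic polynomial in $a,a^{+}$: for $\sigma_k$ only the diagonal terms survive and yield the Weyl‑symmetric combination $aa^{+}+a^{+}a$; for $\sigma_j$ and for the imaginary $\sigma_i$ only the off‑diagonal entries contribute and yield $a^{+2}\pm a^{2}$, with the relative $i$ coming from the explicit factor in (51).

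Second, I would use the canonical relation $[a,a^{+}]=1$ to normalise the operator ordering and read off the prefactors. Matching against (58)--(60) should produce exactly $T_1$, $T_2$, $T_3$ once the signs introduced by lowering/raising with $\epsilon$ are reconciled against the signs appearing in $\sigma_i\wedge\sigma_j=-i\sigma_k$ etc.\ in (54)--(56). The mechanical part of the argument is thus a three‑line verification for each index.

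As an independent, and in my view cleaner, check I would verify the defining commutation relation (50) directly for $L_i$ defined by the RHS of (57). Using the Leibniz rule,
\begin{equation*}
[\tfrac12 a^{\beta}(\sigma_i)_{\beta}{}^{\alpha}a_{\alpha},\,a^{\gamma}]
=\tfrac12(\sigma_i)_{\beta}{}^{\alpha}\bigl([a^{\beta},a^{\gamma}]a_{\alpha}+a^{\beta}[a_{\alpha},a^{\gamma}]\bigr),
\end{equation*}
and inserting the symplectic commutators reduces the right‑hand side to $\tfrac12 a^{\beta}(\sigma_i)_{\beta}{}^{\gamma}$, i.e.\ exactly (50). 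Since the metaplectic generators $T_i$ of (58)--(60) are known (see [6,7]) to satisfy the same relation acting on the oscillator Fock representation, and since both $L_i$ and $T_i$ are quadratic in $a,a^{+}$ with no central part, the equality $L_i=T_i$ follows; evaluating both operators on $|0\rangle$ fixes any would‑be additive constant.

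The main obstacle is not conceptual but combinatorial: keeping the symplectic index conventions consistent with the chosen operator ordering. The raising $\epsilon^{\beta\gamma}$ flips a sign between the two spinor components, and this sign must be correlated with the non‑commutativity of $a$ and $a^{+}$ in order for the Weyl‑symmetric form of $T_3$, and the correct relative sign between the $a^{+2}$ and $a^{2}$ pieces of $T_1$, $T_2$, to emerge. Once these conventions are fixed, the proposition reduces to a purely algebraic identity, and it makes manifest that the spinor chain (1) is driven by the $Mp(2)$ algebra rather than by the Lorentz algebra $L(3)\simeq SO(1,2)$.
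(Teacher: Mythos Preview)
Your proposal is correct and follows essentially the same route as the paper: a direct, componentwise evaluation of the bilinear $\tfrac{1}{2}a^{\beta}(\sigma_i)_{\beta}{}^{\alpha}a_{\alpha}$ against the explicit matrices (51)--(53), followed by identification with (58)--(60). The paper organises the computation slightly differently: it rewrites $L_i=\overline{u}\,M_i\,v$ with $\overline{u}=(a^{+},a)$, $v=(a,a^{+})^{T}$, and observes that in Sannikov's representation it is $\sigma_k$ (not the symplectic $\epsilon$) that plays the role of the index-raising ``metric'', so that $M_i=\tfrac{1}{4}\sigma_k\sigma_i$, $M_j=-\tfrac{1}{4}\sigma_k\sigma_j$, $M_k=-\tfrac{1}{4}\sigma_k^{2}$; the verification is then literally by inspection. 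Your choice of $\epsilon$-raising is the natural one coming from $[A_\alpha,A_\beta]=\epsilon_{\alpha\beta}$ and leads to the same conclusion, but be aware that reconciling the two conventions is exactly the sign-bookkeeping you flag as the ``main obstacle'': if you compute naively with $a^{1}=a^{+}$, $a^{2}=-a$ you will land on $-2T_3$ rather than $T_3$ unless you absorb the extra $\sigma_k$ that Sannikov's convention carries.

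Your second, commutator-based check is an addition not present in the paper and is a genuinely cleaner way to see why the bilinear must realise the $Mp(2)$ algebra. One caveat: the step ``quadratic with no central part'' does not quite hold as stated, since $T_3=-\tfrac{1}{4}(aa^{+}+a^{+}a)$ does contain a c-number piece; your remedy of evaluating on $|0\rangle$ is the right fix, but note that it effectively sends you back to the explicit computation for that one constant, so the abstract argument is not fully independent of the first.
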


\begin{proof}
Explicitly in matrix form we can write the generators proposed in the paper
[2] (and for instance in[1]) as%
\begin{equation}
L_{i}=\overline{u}\mathbb{M}_{i}v \tag{61}%
\end{equation}%
\begin{align}
\overline{u}  &  \equiv\left(
\begin{array}
[c]{cc}%
a^{+} & a
\end{array}
\right) \tag{62}\\
v  &  \equiv\left(
\begin{array}
[c]{c}%
a\\
a^{+}%
\end{array}
\right)  \tag{63}%
\end{align}
In the representation (50), that is faithful, and taking into account that
$\sigma_{k}$ enter as "metric"in the sense given by Sannikov[2] we have%
\begin{align}
M_{1}  &  =\frac{i}{4}\left(
\begin{array}
[c]{cc}%
0 & 1\\
-1 & 0
\end{array}
\right)  =\frac{1}{4}\sigma_{k}\sigma_{i}\quad\tag{64}\\
M_{2}  &  =-\frac{1}{4}\left(
\begin{array}
[c]{cc}%
0 & 1\\
1 & 0
\end{array}
\right)  =-\frac{1}{4}\sigma_{k}\sigma_{j}\quad\tag{65}\\
M_{3}  &  =-\frac{1}{4}\left(
\begin{array}
[c]{cc}%
1 & 0\\
0 & 1
\end{array}
\right)  =-\frac{1}{4}\sigma_{k}^{2} \tag{66}%
\end{align}
consequently and by inspection (50)\ coincides with (61): thus, the
equivalence (57)\ is proved.
\end{proof}

\section{Concluding remarks}

In this paper, we have studied from the physical and group-theoretical point
of view, the close relation between the Metaplectic group, the Lorentz group
and its covering the $SL(2,C)$ ones. The main emphasis was put to clarify the
existent confusion between the representations of the considered non-compact
groups. To this end, using a typical example, a recently posed problem in [1],
we solved exactly the corresponding equations to the physical scenario given
in [1] ,highlighting consequently the common errors and misunderstandings that
appear to confuse representations: namely, the Metaplectic one with the other
non-compact (Lorentz and Special Linear). The analysis was made easier using
the group generators written with the Harmonic oscillator variables, arriving
at the following conclusions and results:

i) the solutions are coherent states, coinciding with previous theoretical
descriptions (e.g.[7,8,9]);

ii) the transformations and symmetries involved in the equation of [1] do not
belong to the group of Lorentz but to the double cover of $Sp(2)$ and
$SU(1,1)$: the Metaplectic group Mp (2)[2,5,7];

iii) and that these solutions are generally thermal going under certain
conditions to the non-classical condition (squeezed), as was verified before [5].

\section{Acknowledgements}

I am very grateful to the CONICET-Argentina and also to the BLTP-JINR
Directorate for their hospitality and finnantial support for part of this work.

\section{References}

1 Stepanovsky Yu. P.: Nucl.Phys.B (Proc.Suppl.), 102\&103, (2001) 407-411.

2 Sannikov S.S.: JETP 49, (1965) 1913 (in Russian).

3 Majorana E.: Nuovo Cimento 9, (1932) 335.

4 Dirac P.A.M. :Proc. Roy. Soc. A322 (1971) 435.

5 Cirilo-Lombardo D. J.: Physics Letters B 661, 186-191 (2008); Foundations of
Physics 37: 919-950 (2007); Found Phys (2009) 39: 373--396; The European
Physical Journal C - Particles and Fields, 2012, Volume 72, Number 7, 2079 ;
Cirilo-Lombardo Diego Julio and Afonso V.I.: Phys.Lett. A376 (2012) 3599-3603
; Cirilo-Lombardo D. J. and Prudencio T.: Int.J.Geom.Meth.Mod.Phys. 11 (2014) 1450067.

6. Arvind, Biswadeb Dutta, Mehta C. L., and Mukunda N.: Phys. Rev. A 50, 39 (1994)

7. Jordan T. F., Mukunda N. and Pepper S. V.: J. Math. Phys. 4, 1089 (1963).

8. Perelomov A.: Generalized Coherent States and Their Applications, Texts and
Monographs in Physics, Springer Berlin Heidelberg, 1986

9. Klauder, J. R. and Skagerstam B.-S.: Coherent States -- Applications in
Physics and Mathematical Physics (World Scientific, Singapore, 1985);

10. Klauder, J. R. and Sudarshan, E. C. G.: Fundamentals of Quantum Optics
(Benjamin, NY, 1968).

\end{document}